\newcommand{\defeq}{\vcentcolon=}
\newtheorem{lemma}{Lemma}[section]
\newtheorem{claim}[lemma]{Claim}
\newtheorem{theorem}[lemma]{Theorem}
\newtheorem{remark}[lemma]{Remark}
\newcounter{packednmbr}
\title{Approximations of Schatten Norms via Taylor Expansions}
\author{
  Vladimir Braverman%
  \thanks{This material is based upon work supported in part by the National Science Foundation under
Grants No. 1447639, 1650041 and 1652257, Cisco faculty award, and by the ONR Award N00014-18-1-2364.
Email: \texttt{vova@cs.jhu.edu}
  }
  \\
  Johns Hopkins University
}
\begin{document}
\maketitle

\begin{abstract}
In this paper we consider symmetric, positive semidefinite (SPSD) matrix $A$ and present two algorithms for computing the $p$-Schatten norm $\|A\|_p$. The first algorithm works for any SPSD matrix $A$. The second algorithm works for non-singular SPSD matrices and runs in time that depends on $\kappa = {\lambda_1(A)\over \lambda_n(A)}$, where $\lambda_i(A)$ is the $i$-th eigenvalue of $A$. Our methods are simple and easy to implement and can be extended to general matrices.
Our algorithms improve, for a range of parameters, recent results of Musco, Netrapalli, Sidford, Ubaru and Woodruff (ITCS 2018.) and match the running time of the methods by Han, Malioutov, Avron, and Shin (SISC 2017) while avoiding computations of coefficients of Chebyshev polynomials.
\end{abstract}

\section{Introduction}

In many applications of data science and machine learning data is represented by large matrices. Fast and accurate analysis of such matrices is a challenging task that is of paramount importance for the aforementioned applications. Randomized numerical algebra (RNA) is an popular area of research that often provides such fast and accurate algorithmic methods for massive matrix computations. Many critical problems in RNA boil down to approximating spectral functions and one of the most fundamental examples of such spectral functions is Schatten $p$ norm. The $p$-th Schatten norm for matrix $A \in R^{n_1\times n_2}$ is defined as the $l_p$ norm of a vector comprised of singular values of matrix $A$, i.e.,
$$
\|A\|_{p} = \left(\sum_{i=1}^{\min{(n_1,n_2)}} |\sigma_i(A)|^{p}\right)^{1/{p}},
$$
where $\sigma_i(A)$ is the $i$-th singular value of $A$.

%

\subsection{Our Results and Related Work}
In this paper we consider symmetric, positive semidefinite (SPSD) matrix $A$ and present two algorithms for computing $\|A\|_p$. The first algorithm in Section \ref{sec:one} works for any SPSD matrix $A$. The second algorithm in Section \ref{sec:two} works for non-singular SPSD matrices and runs in time that depends on $\kappa = {\lambda_1(A)\over \lambda_n(A)}$, where $\lambda_i(A)$ is the $i$-th eigenvalue of $A$. The table below summarizes our results.
Our methods are simple and easy to implement and can be extended to general matrices. Indeed, to compute $\|B\|_{p}$ for matrix $B$, one can apply our methods to SPSD matrix $B^TB$ and note that $\|B\|_{p} = \|B^TB\|_{p/2}$.
It is well known that for SPSD matrix $A$, $\|A\|_p^p = \mathbf{tr}\ A^p$ and thus, our algorithms provide multiplicative approximations for $\mathbf{tr}\ A^p$.

Musco, Netrapalli, Sidford, Ubaru and Woodruff \cite{musco_et_al:LIPIcs:2018:8339} (see also the full version in \cite{DBLP:journals/corr/MuscoNSUW17}) provided a general approach for approximating spectral functions that works for general matrices.
Table $1$ summarizes some of their results for Schatten norms for general matrices.
Our result in Theorem \ref{th:21} improves the bounds in \cite{musco_et_al:LIPIcs:2018:8339,DBLP:journals/corr/MuscoNSUW17} for a range of parameters, for example, when $p > 1$ and $\epsilon = o(n^{1\over 1-p^2})$ or when $p \ge 1$ and $nnz(A) = o(n^{1+{p -1\over p(p +1)}})$.

In \cite{doi:10.1137/16M1078148}, Han, Malioutov, Avron, and Shin proposed to use general Chebyshev polynomials to approximate a wide class of functions and Schatten norms in particular. The methods in \cite{doi:10.1137/16M1078148} work for invertible matrices and the running time depends on $\kappa$. Table $1$ summarizes their results for Schatten norms. Under the assumptions from \cite{doi:10.1137/16M1078148} the running time in Remark \ref{cor:23} matches the running time of the algorithm in \cite{doi:10.1137/16M1078148} for constant $p$. In addition, our methods do not need to compute or store the coefficients of Chebyshev polynomials.
A straightforward computation of Chebyshev coefficients for Schatten norm \cite{doi:10.1137/16M1078148} requires time $\Omega(z^2)$ where\footnote{In \cite{doi:10.1137/16M1078148} they use $n$ instead of $z$.} $z = {\kappa} \left(p \log \kappa + \log {1\over \epsilon}\right)$.
Thus, for a range of parameters, e.g. when $nnz(A) = o(\kappa)$, a straightforward computation of Chebyshev coefficients may be more expensive than computing the approximation of the Schatten norm in our paper.

Our work has been inspired by the approach of Boutsidis, Drineas, Kambadur, Kontopoulou, and Zouzias \cite{boutsidis2017randomized} that uses Taylor expansion for $\log(1+x)$ to approximate the log determinant.
The coefficients of this expansion have the same sign and thus the Hutchinson estimator \cite{hutchinson} can be applied to each partial sum of this expansion.
This is not the case for the Taylor expansion of $(1-x)^p$. The key idea of our approach is to find $m$ such a constant minus the partial sum of the first $m$ terms is positive. Thus, we can apply the Hutchinson estimator to the corresponding matrix polynomial.

\begin{table}[htb!]\label{tb:1}
\centering
\begin{tabular}{cccc}
\hline\hline
$p$ & $\kappa$ & Running Time & Citation\\
\hline
$> 2$& no &  ${p \over \epsilon^3}\left(nnz(A)n^{1\over p + 1} + n^{1+{2\over 1 +p}}\right)\log {1\over \delta}$ & \cite{musco_et_al:LIPIcs:2018:8339, DBLP:journals/corr/MuscoNSUW17}, Theorem 32\\
\hline
$\le 2$ & no &  ${1 \over p^3\epsilon^{\max\{3, 1/p\}}}\left(nnz(A)n^{1 \over 1 + p} + n^{1 + {2 \over 1 + p}}\right)\log {1\over \delta}$ &\cite{musco_et_al:LIPIcs:2018:8339, DBLP:journals/corr/MuscoNSUW17}, Theorem 33\\
\hline
$> 0$ & no & $\left({\gamma(p)\over \epsilon^{2+1/p}} n^{1/p} nnz(A) + n\log n \right)\log {1\over \delta}$& Theorem \ref{th:21}\\
\hline
$\ge 1$ & yes & ${\kappa\over \epsilon^2} \left(p \log \kappa + \log {1\over \epsilon}\right) nnz(A) \log {1\over \delta} $& \cite{doi:10.1137/16M1078148}, Theorem 4.5 \\
\hline
$> 0$ & yes & ${\kappa\over \epsilon^2} \left(p \log \kappa + \log {1\over\epsilon}\right)nnz(A)\log {1\over \delta}$ & Remark \ref{cor:23}\\
\hline

\end{tabular}
\caption{Recent Results for $(1\pm \epsilon)$ approximations of $\|A\|_p^p$, w.p. at least $1-\delta$.
\\We omit $O()$ in all bounds. $\gamma(p)$ is a constant given in $(\ref{eq:blabla})$ that depends only on $p$.}
\end{table}

\subsection{Roadmap}
In Section \ref{sec:notations} we introduce necessary notations. Section \ref{sec:hitch} provides a Hutchinson estimator for a special matrix polynomial that will be used in our algorithms. Section \ref{sec:one} describes an algorithm for approximating $\|A\|_p^p$ that does not require knowledge of $k$.
 Section \ref{sec:two} describes an algorithm for approximating $\|A\|_p^p$ that depends on $k$.
 Section \ref{sec:three} contains necessary technical claims.

\subsection{Notations}\label{sec:notations}
We use the following symbols: $[x]$ be the integer part of $x$,
\begin{equation}\label{eq:1}
{p \choose k} = {p(p -1)\dots (p -k+1)\over k!}, k\ge 2, \ \ \ {p \choose 1} = p.
\end{equation}
We use $\log x$ to denote the natural logarithm and dedicate lower case Latin letters for constants and real variables and upper case Latin letters for matrices. Consider the Taylor expansion
\begin{equation}\label{eq:bla}
(1-x)^p = 1 + \sum_{k=1}^\infty {p \choose k} (-1)^k x^k, \ \ \ -1<x<1.
\end{equation}
Denote
\begin{equation}\label{eq:6}
h(x) = \sum_{k=1}^\infty (-1)^{k-1} {p \choose k} x^k.
\end{equation}
It follows from $(\ref{eq:bla})$ that for $|x| < 1$:
\begin{equation}\label{eq:5}
1 - (1-x)^p = h(x).
\end{equation}
Denote
\begin{equation}\label{eq:7}
h_m(x) = \sum_{k=m+1}^\infty (-1)^{k-1} {p \choose k} x^k,
\end{equation}
and
\begin{equation}\label{eq:8}
\tilde{h}_m(x) = \sum_{k=1}^m (-1)^{k-1} {p \choose k} x^k.
\end{equation}
According to $(\ref{eq:5})$,
\begin{equation}\label{eq:9}
1 - (1-x)^p = h(x) = \tilde{h}_m(x) + h_m(x).
\end{equation}
Denote by $H_t$ be the Hutchinson estimator \cite{hutchinson} for the trace:
\begin{equation}\label{eq:9999}
H_t(X) = {1\over t} \sum_{i=1}^t g_i^T X g_i,
\end{equation}
where $g_1,\dots, g_t \in R^n$ are i.i.d. vectors whose entries are independent Rademacher variables.
Note that $H_t(I_n) = n$.
We will be using the following well-known results.

\begin{theorem}\label{th:avrontol} (Roosta-Khorasani and Ascher \cite{roosta2015improvedArxiv}, Theorem $1$)
Let $Y\in R^{n\times n}$ be an SPSD matrix. Let $t> {8\over \epsilon^2}\log {1\over \delta}$. Then, with probability at least $1-\delta$:
\begin{equation}\label{eq:sdfg}
\left|H_t(Y) - \mathbf{tr}\ Y\right| \le \epsilon\ \mathbf{tr}\  Y.
\end{equation}
\end{theorem}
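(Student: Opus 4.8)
The plan is to prove the relative error bound by (i) checking that each term of $H_t(Y)$ is an unbiased estimator of $\mathbf{tr}\,Y$, (ii) establishing a sub-exponential tail for a single Rademacher quadratic form, (iii) tensorizing over the $t$ independent samples, and (iv) using the SPSD structure to bound the norms that appear by $\mathbf{tr}\,Y$.

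First I would set $T_i \defeq g_i^T Y g_i$, so that $H_t(Y) = \tfrac1t\sum_{i=1}^t T_i$ with the $T_i$ i.i.d. Since the entries of each $g_i$ are independent Rademacher, $\mathbb{E}[(g_i)_j(g_i)_k] = \delta_{jk}$, hence $\mathbb{E}[T_i] = \sum_j Y_{jj} = \mathbf{tr}\,Y$, and, using $Y^T=Y$,
\[
T_i - \mathbf{tr}\,Y \;=\; \sum_{j\ne k} Y_{jk}\,(g_i)_j(g_i)_k \;=\; 2\sum_{j<k} Y_{jk}\,(g_i)_j(g_i)_k .
\]
A direct second moment computation gives $\operatorname{Var}(T_i) = 2\sum_{j\ne k} Y_{jk}^2 \le 2\|Y\|_F^2$, but Chebyshev's inequality alone only produces a $1/\delta$ dependence, so the core of the argument is an exponential tail bound for this order-two Rademacher chaos.

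Next I would bound the moment generating function of $T_i - \mathbf{tr}\,Y$: there are absolute constants $c_1, c_2$ such that $\mathbb{E}\!\left[\exp\!\left(\theta(T_i - \mathbf{tr}\,Y)\right)\right] \le \exp\!\left(c_1\theta^2\|Y\|_F^2\right)$ whenever $|\theta| \le c_2/\|Y\|_2$, with $\|Y\|_2$ the operator norm. This is the Hanson--Wright phenomenon; one proves it by diagonalizing the zero-diagonal part of $Y$ and bounding the Laplace transform of each $\mu_k (v_k^T g_i)^2$ summand, or simply by quoting the chaos MGF bound. Because the $T_i$ are independent, the MGF of $H_t(Y) - \mathbf{tr}\,Y$ factorizes, giving $\mathbb{E}\!\left[\exp\!\left(\theta(H_t(Y)-\mathbf{tr}\,Y)\right)\right] \le \exp\!\left(c_1\theta^2\|Y\|_F^2/t\right)$ for $|\theta| \le c_2 t/\|Y\|_2$. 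Optimizing $\theta$ in the Chernoff bound, and repeating with $-\theta$, yields a Bernstein-type inequality
\[
\Pr\!\left[\,\bigl|H_t(Y) - \mathbf{tr}\,Y\bigr| > s\,\right] \;\le\; 2\exp\!\left(-c\,t\,\min\!\left(\frac{s^2}{\|Y\|_F^2},\ \frac{s}{\|Y\|_2}\right)\right).
\]

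To finish, I would use that $Y$ is SPSD with nonnegative eigenvalues $\lambda_1 \ge \cdots \ge \lambda_n \ge 0$, so $\|Y\|_F^2 = \sum_j \lambda_j^2 \le \big(\sum_j \lambda_j\big)^2 = (\mathbf{tr}\,Y)^2$ and $\|Y\|_2 = \lambda_1 \le \sum_j \lambda_j = \mathbf{tr}\,Y$. Taking $s = \epsilon\,\mathbf{tr}\,Y$ with $\epsilon \in (0,1)$ makes the minimum equal to $\epsilon^2$, so the tail is at most $2\exp(-c\,t\,\epsilon^2)$, and $t > 8\epsilon^{-2}\log(1/\delta)$ drives it below $\delta$ once $c$ is pinned to the right normalization of the chaos constant. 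The main obstacle is exactly this last step: the generic Hanson--Wright constants are lossy, so to land on the clean threshold $t > 8\epsilon^{-2}\log(1/\delta)$ one must carry out the single-sample MGF estimate carefully and specialized to Rademacher entries (as in Roosta-Khorasani and Ascher), rather than invoking a black-box inequality; the remaining steps are routine.
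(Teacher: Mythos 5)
You should first note that the paper contains no proof of Theorem \ref{th:avrontol} at all: it is imported verbatim from Roosta-Khorasani and Ascher \cite{roosta2015improvedArxiv}, so the only question is whether your argument actually delivers the stated bound. Your outline is sound as far as it goes: the unbiasedness computation, the variance bound $2\sum_{j\ne k}Y_{jk}^2\le 2\|Y\|_F^2$, the Hanson--Wright/Bernstein tensorization, and the SPSD estimates $\|Y\|_F\le \mathbf{tr}\,Y$ and $\|Y\|_2\le\mathbf{tr}\,Y$ are all correct, and together they prove a bound of the form $t\ge C\epsilon^{-2}\log(2/\delta)$ for some unspecified absolute constant $C$. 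But that is genuinely weaker than the theorem as stated, and the step you defer (``pin $c$ to the right normalization'') is not a routine finish --- it is the entire content of the result, since the statement is purely quantitative and the constant $8$ appears in it explicitly. Two concrete items remain open in your write-up: (i) an explicit, Rademacher-specific single-sample MGF bound replacing the generic chaos constants $c_1,c_2$, and (ii) the two-sided bookkeeping, since your Chernoff argument yields $2\exp(-ct\epsilon^2)\le\delta$, i.e.\ $t\ge c^{-1}\epsilon^{-2}\log(2/\delta)$, and $\log(2/\delta)$ cannot be absorbed into $\log(1/\delta)$ uniformly in $\delta$ (the original statement in \cite{roosta2015improvedArxiv} indeed carries $\log(2/\delta)$).

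For comparison, the cited proof does not go through a black-box Hanson--Wright inequality. It exploits the SPSD structure from the outset, writing $g^TYg=\sum_j\lambda_j(u_j^Tg)^2\ge 0$ via the eigendecomposition, and bounds the upper and lower tails separately with an explicit moment-generating-function estimate for squared Rademacher linear forms (dominated by the corresponding Gaussian/$\chi^2$ MGF, with H\"older's inequality handling the dependence among the $u_j^Tg$). This avoids the operator-norm truncation $|\theta|\le c_2/\|Y\|_2$ and the lossy generic constants, and it is precisely what produces a clean, dimension-free threshold with a small explicit constant. So: right direction and correct qualitative conclusion, but as written your proposal establishes the theorem only up to an unspecified constant, which for this particular statement is the whole point.
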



\begin{lemma}\label{lm:5} (Boutsidis, Drineas, Kambadur, Kontopoulou, and Zouzias \cite{boutsidis2017randomized}, Lemma $3$)
Let $A$ be a symmetric positive semidefinite matrix. There exists an algorithm that runs in time $O((n+nnz(A))\log n \log {1\over \delta})$ and outputs $\alpha$ such that, with probability at least $1-\delta$:
$$
{1\over 6}\lambda_1(A)\le \alpha \le \lambda_1(A).
$$
\end{lemma}

\section{Hutchinson Estimator for $\tilde{h}_m(X)$}\label{sec:hitch}
The following simple algorithm computes $H_t(\tilde{h}_m(X))$.

\begin{algorithm}
	\caption{\label{alg:lp set sektch}}
	\begin{algorithmic}[1]
		\State \textbf{Input:} Matrix $X \in R^{n\times n}, t\ge 1$,
        \Comment $\tilde{h}_m$ is defined in $(\ref{eq:8})$ \ \ \ \ \
        \State \textbf{Output:} $H_t(\tilde{h}_m(X))$.
		\State \textbf{Initialization:}
		\State Let $g_1,g_2,\dots, g_t \in \{-1, +1\}^n$ be i.i.d. random vectors \\ \ \ \ \ \ \  \ \ \ whose entries are independent Rademacher variables.
        \For{$i = 1,2,\dots, t$}
            \State $v_1 = Xg_i$, \ \  $u_1 = g_i^T v_1$
            \State $a_1 = p$
            \State $S_i^1 = a_1u_1$
            \For{$k = 2,\dots, m$}
                \State $v_k = Xv_{k-1}$
                \State $u_k = g_i^Tv_{k-1}$
                \State $a_k = a_{k-1}{p-(k-1)\over k}$
                \State $S_i^k = S_i^{k-1} + (-1)^{k-1}a_ku_k$
            \EndFor
        \EndFor
        \State Return $y = {1\over t}\sum_{i = 1}^t S_i$.
	\end{algorithmic}
\end{algorithm}


\begin{theorem}\label{th:five}
Let $X\in R^{n\times n}$ be a matrix and let $t\ge 1$ be an integer.
Then the output $y$ of Algorithm \ref{alg:lp set sektch} is $H_t(\tilde{h}_m(X))$ and the running time of the algorithm is $O(t \cdot nnz(X))$
\end{theorem}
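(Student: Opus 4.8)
The plan is to verify two things: correctness (the output equals $H_t(\tilde h_m(X))$) and the running time bound. Both follow by unwinding the loop structure of Algorithm~\ref{alg:lp set sektch} and matching it against the definitions~(\ref{eq:1}) and~(\ref{eq:8}).

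For correctness, I would first track the scalar coefficients $a_k$. We have $a_1 = p = {p \choose 1}$, and the recurrence $a_k = a_{k-1}\cdot\frac{p-(k-1)}{k}$ together with a trivial induction gives $a_k = \frac{p(p-1)\cdots(p-k+1)}{k!} = {p \choose k}$ for all $2 \le k \le m$, matching~(\ref{eq:1}). Next I would track the vectors: the inner loop sets $v_1 = Xg_i$ and $v_k = Xv_{k-1}$, so an immediate induction shows $v_k = X^k g_i$. Hence $u_k = g_i^T v_{k-1} = g_i^T X^{k-1} g_i$ for $k \ge 2$, and $u_1 = g_i^T v_1 = g_i^T X g_i$; the apparent off-by-one in the index of $v$ inside $u_k$ is exactly compensated by the fact that $v_1$ already carries one factor of $X$, so in all cases $u_k = g_i^T X^k g_i$. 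Wait — I should double check: $u_k = g_i^T v_{k-1}$ with $v_{k-1} = X^{k-1}g_i$ gives $u_k = g_i^T X^{k-1} g_i$, which would be $X^{k-1}$, not $X^k$. Let me reconcile: for $k=1$ we get $u_1 = g_i^T X g_i$ (since $v_1 = Xg_i$), i.e. the $X^1$ term; for $k=2$ the code uses $u_2 = g_i^T v_1 = g_i^T X g_i$ as well — so in fact the intended reading must be that $u_k = g_i^T v_{k} $ after the update $v_k = Xv_{k-1}$ is performed on line~10 before line~11, i.e. within iteration $k$ the vector $v$ has already been advanced, so $u_k = g_i^T X^k g_i$. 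I would state the invariant carefully: \emph{after} executing lines 10–11 in iteration $k$, $v_k = X^k g_i$ and $u_k = g_i^T X^k g_i$. Granting this, $S_i^m = \sum_{k=1}^m (-1)^{k-1} a_k u_k = \sum_{k=1}^m (-1)^{k-1}{p \choose k} g_i^T X^k g_i = g_i^T \tilde h_m(X) g_i$ by~(\ref{eq:8}), where $\tilde h_m(X)$ is the matrix polynomial $\sum_{k=1}^m (-1)^{k-1}{p\choose k} X^k$. Averaging, $y = \frac1t \sum_{i=1}^t S_i^m = \frac1t\sum_{i=1}^t g_i^T \tilde h_m(X) g_i = H_t(\tilde h_m(X))$ by~(\ref{eq:9999}).

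For the running time: the inner loop over $k$ performs one matrix–vector product $v_k = Xv_{k-1}$ per iteration, costing $O(nnz(X))$ each, plus $O(n)$ work for the inner product $u_k$ and $O(1)$ for the scalar/partial-sum updates; since $n \le nnz(X)$ may fail for very sparse $X$ but $O(n + nnz(X)) = O(nnz(X))$ whenever $X$ has no zero rows, and in any case the matrix–vector product dominates, each inner iteration is $O(nnz(X))$. Running $k$ from $1$ (or $2$) to $m$ gives $O(m\cdot nnz(X))$ per outer iteration $i$, and the $t$ outer iterations give $O(t\cdot m\cdot nnz(X))$ total. The theorem statement claims $O(t\cdot nnz(X))$, which I read as suppressing the dependence on $m$ (or treating $m$ as a constant in this lemma); I would flag this and write the honest bound $O(t\cdot m\cdot nnz(X))$ or match the paper's convention.

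The only genuinely delicate point — the ``main obstacle'' — is bookkeeping the indices of $v_k$, $u_k$, and the sign $(-1)^{k-1}$ so that they line up exactly with the definition of $\tilde h_m$ in~(\ref{eq:8}); everything else is a one-line induction. I would make the loop invariant fully explicit (stating the values of $v$, $u$, $a$, and $S_i$ at the end of each iteration) to remove any ambiguity about when the update on line~10 takes effect, and then the identification $S_i^m = g_i^T\tilde h_m(X)g_i$ and the running-time count are both immediate.
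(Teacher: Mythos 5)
Your proof is correct and follows essentially the same route as the paper's: the same loop invariant ($a_k = {p \choose k}$, $v_k = X^k g_i$, $u_k = g_i^T X^k g_i$, and $S_i^k = g_i^T\bigl(\sum_{j=1}^k (-1)^{j-1}a_j X^j\bigr)g_i$ after iteration $k$), followed by averaging over $i$ to get $H_t(\tilde h_m(X))$. The two issues you flag are typos in the paper rather than gaps in your argument: the paper's own proof asserts $u_k = g_i^T X^k g_i$, confirming that the pseudocode line should read $u_k = g_i^T v_k$; and the stated running time omits the factor $m$ --- your bound $O(t\, m\, nnz(X))$ is what is actually used downstream in the running-time computation of Theorem~\ref{th:21}.
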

\begin{proof}
For $k=1,\dots,m$ denote $a_k = {p \choose k}$ and note that $a_k = a_{k-1}{p-(k-1)\over k}$ and $\tilde{h}_m(X) = \sum_{k=1}^m (-1)^{k-1}a_kX^k$, by $(\ref{eq:8})$.
Fix $i \in \{1,\dots,t\}$ and note that after the $k$-th iteration we have
$$
u_k = g_i^TX^kg_i, \ \ \ S_i^k = g_i^T\left(\sum_{j=1}^k (-1)^{j-1}a_j X^j\right)g_i.
$$
Thus,
$
S_i^{m} = g_i^T(\tilde{h}_m(X))g_i.
$
Denote $Y= \tilde{h}_m(X)$; thus the algorithm outputs the estimator $y = {1\over t}\sum_{i=1}^t g_i^TYg_i = H_t(\tilde{h}_m(X))$ and the theorem follows.
The bound on the running time follows from direct computations.
\end{proof}


\section{Algorithm for estimates without $\kappa$}\label{sec:one}

\begin{algorithm}
	\caption{\label{alg:withoutkappa}}
	\begin{algorithmic}[1]
		\State \textbf{Input:} SPSD Matrix $A \in R^{n\times n}$, $p\ge 1, \epsilon\in (0,1), \delta\in (0,1)$.
        \State \textbf{Output:} $y$ such that, w.p. at least $1-\delta$, $(1-\epsilon)\|A\|_p^p \le y \le (1+\epsilon)\|A\|_p^p$.
		\State \textbf{Initialization:}
        \State $t > {8\over \epsilon^2} \log {2\over \delta}, m > 7\left({3c(p)\over p} {n\over\epsilon}\right)^{1/p}, \beta_m=  {c(p)\over p} (m+1)^{-p}$
        \Comment $c(p)$ is given in $(\ref{eq:4.5})$
		\State Compute $\alpha$ such that  $\lambda_1(A) \le \alpha \le 6\lambda_1(A)$.
        \Comment Use the algorithm from Lemma \ref{lm:5}.
        \State Return $y = \alpha^p \left[(1+\beta_m)n - H_t(\tilde{h}_m(I_n  -  \alpha^{-1}{A}))\right]$.
        \Comment Use Algorithm \ref{alg:lp set sektch}. $\tilde{h}_m$ is a function given by $(\ref{eq:8})$.
	\end{algorithmic}
\end{algorithm}

\begin{theorem}\label{th:21}
Let $A\in R^{n\times n}$ be an SPSD matrix and let $y$ be the output of Algorithm \ref{alg:withoutkappa} on input $A,\epsilon,\delta$.
Then, with probability at least $1-\delta$, we have:
$$
\left|\|A\|_p^p - y\right| \le \epsilon\|A\|_p^p.
$$
If we use Algorithm \ref{alg:lp set sektch} to compute $H_t$ and Lemma \ref{lm:5} to compute $\alpha$ then the running time of Algorithm \ref{alg:withoutkappa} is
$$
O\left(\left[{\gamma(p)\over \epsilon^{2+1/p}} n^{1/p} nnz(A) + n\log n \right]\log {1\over \delta}\right),
$$
where
\begin{equation}\label{eq:blabla}
\gamma(p) = {c(p)\over p}^{1/p}(1+6^p)^2.
\end{equation}
\end{theorem}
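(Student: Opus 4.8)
The plan is to first establish correctness and then bound the running time. For correctness, the key algebraic identity is that for a symmetric matrix $X$ with $\|X\|_2 < 1$ we have $\mathbf{tr}(I_n - X)^p = n - \mathbf{tr}\, h(X)$ by \eqref{eq:5} applied spectrally to the eigenvalues of $X$, and hence $\mathbf{tr}(I_n - X)^p = n - \mathbf{tr}\,\tilde h_m(X) - \mathbf{tr}\, h_m(X)$ by \eqref{eq:9}. We apply this with $X = I_n - \alpha^{-1}A$; since $\lambda_1(A) \le \alpha$, all eigenvalues of $X$ lie in $[1 - 6, 1) \subseteq (-1,1)$ wait — more carefully, eigenvalues of $\alpha^{-1}A$ lie in $(0,1]$, so eigenvalues of $X$ lie in $[0,1)$, which is the regime where \eqref{eq:bla} converges. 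Then $\mathbf{tr}(\alpha^{-1}A)^p = \alpha^{-p}\|A\|_p^p$, so $\|A\|_p^p = \alpha^p\big[n - \mathbf{tr}\,\tilde h_m(X) - \mathbf{tr}\, h_m(X)\big]$. The algorithm returns $\alpha^p\big[(1+\beta_m)n - H_t(\tilde h_m(X))\big]$, so the error splits into (i) the Hutchinson error $\alpha^p|H_t(\tilde h_m(X)) - \mathbf{tr}\,\tilde h_m(X)|$ and (ii) the truncation/padding error $\alpha^p|\beta_m n + \mathbf{tr}\, h_m(X)|$.

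For (ii), I would invoke the technical claims promised in Section \ref{sec:three}: the choice $\beta_m = \tfrac{c(p)}{p}(m+1)^{-p}$ is designed so that $0 \le \beta_m + \tfrac{1}{n}\mathbf{tr}\,h_m(X) \le$ (something small), using a tail bound $|h_m(x)| \le c(p) \tfrac{x^{m+1}}{m+1} \le c(p)(m+1)^{-p}$ type estimate for $x \in [0,1)$, combined with $\mathbf{tr}$ being a sum of $n$ such terms. Crucially, the matrix $Y := (1+\beta_m)I_n - \tilde h_m(X)$ must be SPSD — this is exactly the ``constant minus partial sum is positive'' idea flagged in the introduction — so that Theorem \ref{th:avrontol} applies to it; note $H_t(\tilde h_m(X)) = (1+\beta_m)n - H_t(Y)$ since $H_t$ is linear and $H_t(I_n)=n$. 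Then with $t > \tfrac{8}{\epsilon^2}\log\tfrac{2}{\delta}$ (note the $2/\delta$ to leave room for a union bound with Lemma \ref{lm:5}), Theorem \ref{th:avrontol} gives $|H_t(Y) - \mathbf{tr}\, Y| \le \tfrac{\epsilon}{2}\mathbf{tr}\, Y$ with probability $1 - \delta/2$. Finally one bounds $\mathbf{tr}\, Y$ in terms of $\|A\|_p^p/\alpha^p$ (it is $\mathbf{tr}(I_n-X)^p$ up to the small $\beta_m n + \mathbf{tr}\,h_m(X)$ correction), and bounds $\alpha^p \le 6^p \lambda_1(A)^p \le 6^p \|A\|_p^p$ and similar, to convert the additive $\alpha^p \cdot(\text{stuff})$ errors into a multiplicative $\epsilon\|A\|_p^p$ bound. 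The choice $m > 7\big(\tfrac{3c(p)}{p}\tfrac{n}{\epsilon}\big)^{1/p}$ is precisely what makes $\beta_m n$ small enough relative to $\|A\|_p^p \ge \lambda_1(A)^p \ge (\alpha/6)^p$; I expect reconciling these constants is the fiddly part but follows from the Section \ref{sec:three} claims.

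For the running time: computing $\alpha$ costs $O((n+nnz(A))\log n \log\tfrac{1}{\delta})$ by Lemma \ref{lm:5}. Computing $H_t(\tilde h_m(X))$ via Algorithm \ref{alg:lp set sektch} costs $O(t\cdot m\cdot nnz(X))$ by Theorem \ref{th:five} — here each of the $t$ trials does $m$ matrix-vector products — wait, Theorem \ref{th:five} states $O(t\cdot nnz(X))$, but that must implicitly absorb $m$; in any case $X = I_n - \alpha^{-1}A$ has $nnz(X) = O(nnz(A) + n)$. Substituting $t = O(\epsilon^{-2}\log\tfrac1\delta)$ and $m = O\big((c(p)/p)^{1/p} (n/\epsilon)^{1/p}\big)$ gives a leading term $O\big(\epsilon^{-2-1/p} (c(p)/p)^{1/p} n^{1/p} nnz(A) \log\tfrac1\delta\big)$, plus the $O(n\log n \log\tfrac1\delta)$ from Lemma \ref{lm:5}. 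Collecting the $\kappa$-independent $p$-dependent constants (and the factor $(1+6^p)^2$ coming from the $\alpha^p \le 6^p\lambda_1(A)^p$ slack used twice in the error analysis) into $\gamma(p) = (c(p)/p)^{1/p}(1+6^p)^2$ as in \eqref{eq:blabla} yields the claimed bound. The only genuine obstacle is verifying SPSD-ness of $Y = (1+\beta_m)I_n - \tilde h_m(X)$ for the given $m, \beta_m$ — i.e.\ that $\tilde h_m(x) \le 1 + \beta_m$ for all $x\in[0,1)$ — which I would prove by showing $h_m(x) = h(x) - \tilde h_m(x) \ge -\beta_m$ on $[0,1)$ via the alternating-tail estimate for ${p\choose k}$; everything else is bookkeeping with the constants from Section \ref{sec:three}.
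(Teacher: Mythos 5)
Your proposal follows essentially the same route as the paper's proof: the same decomposition into the SPSD matrix $(1+\beta_m)I_n-\tilde{h}_m(C)$ handled by Theorem \ref{th:avrontol} plus the $\beta_m n$ and $\mathbf{tr}\,h_m(C)$ correction terms bounded by the tail estimates of Section \ref{sec:three} (Claims \ref{claim:2} and \ref{cl:11}), the same conversion of additive error to multiplicative error via $\alpha\le 6\lambda_1(A)$ and $\lambda_1(A)^p\le\|A\|_p^p$, and the same running-time accounting with $t=O(\epsilon^{-2}\log\frac1\delta)$ and $m=O((c(p)n/(p\epsilon))^{1/p})$. Your observation that the cost of Algorithm \ref{alg:lp set sektch} is really $O(t\cdot m\cdot nnz(X))$, with the factor $m$ implicit in the statement of Theorem \ref{th:five}, is also correct and is exactly what the theorem's running-time bound uses.
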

\begin{proof}
We have for $x\in [0,1]$, $x^p = [1-(1-x)]^p = 1 - h(1-x).$
Let $B$ be a SPSD matrix with $\lambda_1(B)\le 1$.
Denote $C = I_n-B$.
Since $\lambda_1(B) \in [0,1]$ we have that $C$ is also a SPSD matrix and
\begin{equation}\label{eq:41}
B^p = I_n - h(C).
\end{equation}

According to Claim \ref{cl:11} the matrix $(1+\beta_m)I_n - \tilde{h}_m(C)$ is SPSD. So, we may apply estimator $H_t$ to this matrix. We also have
$$
B^p = I_n - h(C) =  I_n - \tilde{h}_m(C) - h_m(C) = [(\beta_m+1)I_n - \tilde{h}_m(C)] - [{h}_m(C) + \beta_mI_n].
$$
Because $H_t(I_n) = n$,
$$
\left|\mathbf{tr}\ B^p - \left[(1+\beta_m)n - H_t(\tilde{h}_m(C))\right]\right|
 = \left|\mathbf{tr}\ B^p - H_t\left[(1+\beta_m)I_n - \tilde{h}_m(C)\right]\right|
$$
$$
\le \left|\mathbf{tr} \left[(\beta_m+1)I_n-\tilde{h}_m(C)\right]-H_t\left[(\beta_m+1)I_n-\tilde{h}_m(C)\right]\right| + |\mathbf{tr} (\beta_mI_n)| + |\mathbf{tr} ({h}_m(C))| = \Delta_1 + \Delta_2 + \Delta_3.
$$
Below we will bound each $\Delta$ separately.
\subsection*{Bounding $\Delta_1$ and $\Delta_2$}\label{sec:twotwo}

According to Theorem \ref{th:avrontol} for $t> {8\over \epsilon^2} \log (1/\delta)$ we have with probability at least $1-\delta$:
$$
\Delta_1 \le \epsilon \mathbf{tr} \left[(\beta_m+1)I_n-\tilde{h}_m(C)\right] = \epsilon n\beta_m + \epsilon \mathbf{tr} \left[I_n-\tilde{h}_m(C)\right]
$$
$$
= \epsilon n\beta_m + \epsilon \mathbf{tr} \left[I_n-{h}(C)\right] + \epsilon \mathbf{tr} {h}_m(C) = \epsilon n\beta_m + \epsilon \mathbf{tr}\ B^p + \epsilon \mathbf{tr}\ {h}_m(C).
$$
Note that $\Delta_2 = n\beta_m$.

\subsection*{Bounding $\Delta_3$}\label{sec:twotwo}

Denoting $\lambda_i = \lambda_i(B)$ we have, using Claim \ref{claim:2}:
$$
|\mathbf{tr}\ {h}_m(C)| = |\sum_{i=1}^n h_m(1-\lambda_i)| \le {c(p)\over p} \sum_{i=1}^n(1-\lambda_i)^{m+1}(m+1)^{-p} \le {c(p)\over p} n(m+1)^{-p} =  n\beta_m.
$$

\subsection*{The final bound for $B$}\label{sec:twotwo}
So,
$$
\Delta_1 + \Delta_2 +\Delta_3 \le (2\epsilon + 1) n\beta_m + \epsilon \mathbf{tr}\ B^p \le 3 n\beta_m + \epsilon \mathbf{tr}\ B^p.
$$
Thus, if $3 n\beta_m = {3c(p)\over p} n(m+1)^{-p} < \epsilon$, i.e., $m> \left({3c(p)\over p} {n\over\epsilon}\right)^{1/p}$ then
$$
\left|\mathbf{tr}\ B^p - H_t\left[(1+\beta_m)I_n - \tilde{h}_m(C)\right]\right| \le \epsilon + \epsilon \mathbf{tr}\ B^p,
$$
with probability at least $1-\delta.$

\subsection*{Back to matrix $A$}\label{sec:twotwo}
Let $\alpha$ be the constant from Lemma \ref{lm:5} and $B = A/\alpha$. Then $\mathbf{tr}\ B^p = {\alpha^{-p}}\mathbf{tr}\ A^p $ and $C = I_n - B = I_n - A/\alpha$. Recall that $\|A\|_p^p = \mathbf{tr}\ A^p$ and $y = \alpha^p \left[(1+\beta_m)n - H_t(\tilde{h}_m(I_n  -  \alpha^{-1}{A}))\right]$. So, from the above:
$$
\left|\|A\|_p^p - y \right|= \left|\mathbf{tr}\ A^p - \alpha^p \left[(1+\beta_m)n - H_t(\tilde{h}_m(I_n  -  \alpha^{-1}{A}))\right]\right| \le \alpha^p\epsilon \mathbf{tr}\ B^p + \alpha^p \epsilon
$$
$$
\le \epsilon \mathbf{tr}\ A^p + \epsilon (6\lambda_1(A))^p \le \epsilon(1+6^p)\mathbf{tr}\ A^p = \epsilon(1+6^p)\|A\|_p^p,
$$
with probability at least $1-2\delta.$ We obtain the result by substituting $\epsilon$ with $\epsilon\over (1+6^p)$, $\delta$ with $\delta \over 2$, and noting that $(1+6^p)^{1/p} \le 7$.
The bound on the running time follows from Theorem \ref{th:five} and Lemma \ref{lm:5}.
\end{proof}

\section{Algorithm for estimates with $\kappa$}\label{sec:two}

\begin{algorithm}
	\caption{\label{alg:withkappa}}
	\begin{algorithmic}[1]
		\State \textbf{Input:} SPSD Matrix $A \in R^{n\times n}$, $\kappa>0, p\ge 1, \epsilon\in (0,1), \delta\in (0,1)$.
        \State \textbf{Output:} $y$ such that, w.p. at least $1-\delta$, $(1-\epsilon)\|A\|_p^p \le y \le (1+\epsilon)\|A\|_p^p$.
		\State \textbf{Initialization:}
        \State $t > {8\over \epsilon^2} \log {2\over \delta}, m > 6\kappa \left[p \log (6\kappa) + \log \left({3\over\epsilon}\right) + \log\left({c(p)\over p}\right)\right]$
        \Comment $c(p)$ is given in $(\ref{eq:4.5})$
		\State Compute $\alpha$ such that  $\lambda_1(A) \le \alpha \le 6\lambda_1(A)$.
        \Comment Use the algorithm from Lemma \ref{lm:5}.
        \State Return $y = \alpha^p \left[ n - H_t(\tilde{h}_m(I_n-{\alpha}^{-1}A))\right]$.
        \Comment Use Algorithm \ref{alg:lp set sektch}. $\tilde{h}_m$ is a function given by $(\ref{eq:8})$.
	\end{algorithmic}
\end{algorithm}

\begin{theorem}\label{th:22}
Let $A\in R^{n\times n}$ be an SPSD matrix such that ${\lambda_1(A)\over \lambda_n(A)} \le \kappa$ and let $y$ be the output of Algorithm \ref{alg:withkappa} on input $A, \kappa,\epsilon,\delta$.
Then, with probability at least $1-\delta$, we have:
$$
\left|\|A\|_p^p - y\right| \le \epsilon\|A\|_p^p.
$$
If we use Algorithm \ref{alg:lp set sektch} to compute $H_t$ and Lemma \ref{lm:5} to compute $\alpha$ then the running time of Algorithm \ref{alg:withkappa} is

$$
O\left(\left({1\over \epsilon^2} nnz(A)\kappa \left[p \log (6\kappa) + \log \left({1\over\epsilon}\right) + \log\left({c(p)\over p}\right)\right]+ (n+nnz(A))\log n\right)\log {1\over \delta} \right)
$$
\end{theorem}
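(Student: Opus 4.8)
The plan is to follow the blueprint of the proof of Theorem~\ref{th:21}, with two structural simplifications afforded by the hypothesis $\lambda_1(A)/\lambda_n(A)\le\kappa$: no shift by $\beta_m I_n$ is needed, and consequently the final ``back to $A$'' step carries no $(1+6^p)$ blow-up. First I would invoke Lemma~\ref{lm:5} to get $\alpha$ with $\lambda_1(A)\le\alpha\le 6\lambda_1(A)$ and set $B=A/\alpha$, $C=I_n-B$. The crucial new observation is that now every eigenvalue of $B$ lies in $[\tfrac1{6\kappa},1]$: the upper bound is $\lambda_i(A)/\alpha\le\lambda_1(A)/\alpha\le 1$, and the lower bound is $\lambda_i(A)/\alpha\ge\lambda_n(A)/(6\lambda_1(A))\ge 1/(6\kappa)$. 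Hence the eigenvalues of $C$ lie in $[0,1-\tfrac1{6\kappa}]$, and in particular $\mathbf{tr}\,B^p=\sum_{i}\lambda_i(B)^p\ge n(6\kappa)^{-p}$ --- this lower bound on the quantity being estimated is what the whole argument turns on.

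Next I would use $B^p=I_n-h(C)=[I_n-\tilde h_m(C)]-h_m(C)$, from $(\ref{eq:41})$ and $(\ref{eq:9})$. To apply $H_t$ I must check $I_n-\tilde h_m(C)\succeq 0$; by a computation analogous to Claim~\ref{cl:11} it suffices that $1-\tilde h_m(x)=(1-x)^p+h_m(x)\ge(6\kappa)^{-p}-|h_m(x)|>0$ on $x\in[0,1-\tfrac1{6\kappa}]$, which follows from the tail bound below. Then, as in Theorem~\ref{th:21} and using $H_t(I_n)=n$, I would write $|\mathbf{tr}\,B^p-H_t(I_n-\tilde h_m(C))|\le\Delta_1+\Delta_3$, where $\Delta_1=|\mathbf{tr}[I_n-\tilde h_m(C)]-H_t[I_n-\tilde h_m(C)]|$ and $\Delta_3=|\mathbf{tr}\,h_m(C)|$; the term $\Delta_2=n\beta_m$ of Theorem~\ref{th:21} is now absent.

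The new ingredient is the bound on $\Delta_3$. As in the proof of Theorem~\ref{th:21}, Claim~\ref{claim:2} gives $|\mathbf{tr}\,h_m(C)|=\big|\sum_i h_m(1-\lambda_i(B))\big|\le\tfrac{c(p)}{p}\,n\,(1-\tfrac1{6\kappa})^{m+1}(m+1)^{-p}$. Using $1-\tfrac1{6\kappa}\le e^{-1/(6\kappa)}$ and the choice $m>6\kappa[p\log(6\kappa)+\log(3/\epsilon)+\log(c(p)/p)]$, which rearranges to $e^{-m/(6\kappa)}<\tfrac{\epsilon p}{3c(p)(6\kappa)^p}$, this is at most $\tfrac\epsilon3\,n(6\kappa)^{-p}\le\tfrac\epsilon3\,\mathbf{tr}\,B^p$; thus the eigenvalue lower bound on $B$ is exactly what converts the exponentially small tail into a small multiplicative error, and the same estimate gives $|h_m(x)|\le\tfrac\epsilon3(6\kappa)^{-p}<(6\kappa)^{-p}$, completing the PSD check. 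For $\Delta_1$, Theorem~\ref{th:avrontol} with $t>\tfrac8{\epsilon^2}\log(2/\delta)$ gives, with probability at least $1-\delta/2$, $\Delta_1\le\epsilon\,\mathbf{tr}[I_n-\tilde h_m(C)]=\epsilon(\mathbf{tr}\,B^p+\mathbf{tr}\,h_m(C))\le\epsilon\,\mathbf{tr}\,B^p+\epsilon|\mathbf{tr}\,h_m(C)|$. Adding up, $\Delta_1+\Delta_3=O(\epsilon)\,\mathbf{tr}\,B^p$, so after rescaling $\epsilon$ by an absolute constant, $|\mathbf{tr}\,B^p-H_t(I_n-\tilde h_m(C))|\le\epsilon\,\mathbf{tr}\,B^p$.

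Finally, since $y=\alpha^p[n-H_t(\tilde h_m(C))]=\alpha^p H_t(I_n-\tilde h_m(C))$ and $\mathbf{tr}\,A^p=\alpha^p\,\mathbf{tr}\,B^p$ exactly, we get $\big|\|A\|_p^p-y\big|=\alpha^p\big|\mathbf{tr}\,B^p-H_t(I_n-\tilde h_m(C))\big|\le\epsilon\,\alpha^p\,\mathbf{tr}\,B^p=\epsilon\,\|A\|_p^p$, and a union bound over the failures of Lemma~\ref{lm:5} and Theorem~\ref{th:avrontol} keeps the total failure probability at $\delta$. For the running time: Lemma~\ref{lm:5} costs $O((n+nnz(A))\log n\log\tfrac1\delta)$, and by Theorem~\ref{th:five} each of the $t$ repetitions of Algorithm~\ref{alg:lp set sektch} on $X=I_n-\alpha^{-1}A$ costs $O(m\cdot nnz(X))=O(m\cdot nnz(A))$, using $nnz(I_n-\alpha^{-1}A)=O(nnz(A))$ and the fact that $nnz(A)\ge n$ since $\kappa<\infty$ forces $A$ to be nonsingular; substituting $t=O(\tfrac1{\epsilon^2}\log\tfrac1\delta)$ and $m=O(\kappa[p\log(6\kappa)+\log\tfrac1\epsilon+\log\tfrac{c(p)}p])$ yields the stated bound. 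I expect the only delicate point to be tracking constants carefully enough in the $\Delta_3$ estimate that $(1-\tfrac1{6\kappa})^{m+1}$ is driven below $\tfrac\epsilon3(6\kappa)^{-p}$ exactly under the displayed choice of $m$ --- everything else is a routine adaptation of the proof of Theorem~\ref{th:21}.
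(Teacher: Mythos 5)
Your proposal is correct and follows essentially the same route as the paper's proof: rescale by $\alpha$ from Lemma~\ref{lm:5}, use the eigenvalue range $[\tfrac{1}{6\kappa},1]$ of $B$, split $B^p=[I_n-\tilde h_m(C)]-h_m(C)$, verify positive semidefiniteness of $I_n-\tilde h_m(C)$, apply Theorem~\ref{th:avrontol}, and bound the tail term against $\mathbf{tr}\,B^p$. The only (cosmetic) difference is that where the paper invokes the packaged Claims~\ref{cl:24} and~\ref{cl:25}, you rederive the same estimates directly from Claim~\ref{claim:2} via $(1-\tfrac{1}{6\kappa})^{m+1}\le e^{-(m+1)/(6\kappa)}$ and the lower bound $\mathbf{tr}\,B^p\ge n(6\kappa)^{-p}$, which is the same calculation in inlined form.
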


\begin{remark}\label{cor:23}
In \cite{doi:10.1137/16M1078148} the authors assume that they are given $\lambda_1(A)$ and $\lambda_n(A)$. This assumption is stronger than our assumption that we only know $\kappa = {\lambda_1(A)\over \lambda_n(A)}$. Under the stronger assumption of \cite{doi:10.1137/16M1078148} we do not need to estimate $\lambda_1(A)$ using Lemma \ref{lm:5}. Thus we don't need to perform Step $5$ of Algorithm \ref{alg:withkappa}. Hence the running time becomes
$$
O\left({1\over \epsilon^2} nnz(A)\kappa \left[p \log (6\kappa) + \log \left({1\over\epsilon}\right) + \log\left({c(p)\over p}\right)\right]\log {1\over \delta} \right),
$$
which, for constant $p$, is the same as in \cite{doi:10.1137/16M1078148}.
 \end{remark}

%
\begin{proof} (of Theorem \ref{th:22})
As in Theorem \ref{th:21}, $\alpha$ is the constant from Lemma \ref{lm:5} and $B = A/\alpha$. Denote $\lambda_i = \lambda_i(B) = \lambda_i(A)/\alpha$. We have $\lambda_i\le {\lambda_1(A)/\alpha} < 1$ and
$$
\lambda_i \ge {\lambda_n(A)\over \alpha} = {\lambda_n(A)\over \lambda_1(A)}{\lambda_1(A)\over \alpha} > {1\over 6\kappa},
$$
so $0<1-\lambda_i<1-{1\over 6\kappa}$ for $1\le i\le n$. From here and Claim \ref{cl:24}, with $a = 1-{1\over 6\kappa}$,
$$
1-\tilde{h}_m(1-\lambda_i)> 0, 1\le i\le n,
$$
where
\begin{equation}\label{eq:33}
m> 6\kappa (\log(c(p)/p) + p \log(6\kappa)).
\end{equation}
From here we conclude that for such $m$ the matrix $I_n - \tilde{h}_m(I-B) =I_n - \tilde{h}_m(C)$ is SPSD.
In addition,
$$
B^p = \left[I_n - (I_n - B)\right]^p = I_n - h(C) = [I_n - \tilde{h}_m(C)] - {h}_m(C).
$$
Since $H_t(I_n) = n$,
\begin{equation}\label{eq:378}
\begin{split}
\left|\mathbf{tr}\ B^p - n + H_t(\tilde{h}_m(C))\right| = \left|\mathbf{tr}\ B^p - H_t(I_n - \tilde{h}_m(C))\right| \\
 \le |\mathbf{tr} [I_n - \tilde{h}_m(C)] - H_t(I_n - \tilde{h}_m(C))| & + |\mathbf{tr} \ {h}_m(C)| = \Delta_1 + \Delta_2.
\end{split}
\end{equation}
To conclude, we will provide estimates for $\Delta_1$ and $\Delta_2$.
\subsection*{Estimate for $\Delta_1$}
According to Theorem \ref{th:avrontol}, with probability at least $1-\delta$,
$$
\Delta_1 \le \epsilon \mathbf{tr}\left[I_n - \tilde{h}_m(C)\right] = \epsilon \mathbf{tr}\left[I_n - {h}(C)\right]  + \epsilon \mathbf{tr}\ {h}_m(C).
$$
But $I_n - {h}(C) = B^p$ and so
\begin{equation}\label{eq:34}
\Delta_1 \le  \epsilon \mathbf{tr}\ B^p  + \epsilon \Delta_2.
\end{equation}

\subsection*{Estimate for $\Delta_2$}
We have $1/(6\kappa) \le \lambda_i \le 1$. Applying Claim \ref{cl:25} with $a={1\over 6\kappa}$ we get:
$$
|h_m(1-\lambda_i)|<\epsilon \lambda_i^p,
$$
if $m$ satisfies $(\ref{eq:29})$. For such $m$:
$$
\Delta_2 = |\mathbf{tr} \ {h}_m(C)| \le \sum_{i=1}^n  |h_m(1-\lambda_i)| \le \epsilon \sum_{i=1}^n \lambda_i^p = \epsilon \mathbf{tr}\ B^p.
$$
\subsection*{The final estimate for matrix $B$}
If
\begin{equation}\label{eq:43}
m \ge 6\kappa \left[p \log (6\kappa) + \log \left({1\over\epsilon}\right) + \log\left({c(p)\over p}\right)\right].
\end{equation}
then  $m$ satisfies $(\ref{eq:33})$, so for such $m$ from $(\ref{eq:378})$ and $(\ref{eq:34})$:
\begin{equation}\label{eq:36}
\left|\mathbf{tr}\ B^p - \left[n - H_t(\tilde{h}_m(C))\right]\right| \le \Delta_1 + \Delta_2 \le \epsilon \mathbf{tr}\ B^p + (1+\epsilon)\Delta_2 \le \epsilon (2+\epsilon) \mathbf{tr}\ B^p \le 3\epsilon \mathbf{tr}\ B^p.
\end{equation}
\subsection*{Back to matrix $A$}
We have $B = A/\alpha$ and $\mathbf{tr}\ B^p = {\alpha^{-p}}\mathbf{tr}\ A^p $.
Recall that $\|A\|_p^p = \mathbf{tr}\ A^p$ and $y = \alpha^p \left[ n - H_t(\tilde{h}_m(1-{\alpha}^{-1}A))\right]$.
So we get from $(\ref{eq:36})$, w.p. at least $1-2\delta$:
$$
\left|\|A\|_p^p - y \right| = \left|\mathbf{tr}\ A^p - \alpha^p \left[n - H_t(\tilde{h}_m(C))\right]\right| \le  \alpha^p (3\epsilon) \mathbf{tr}\ B^p \le 3\epsilon\ \mathbf{tr}\ A^p = 3\epsilon\|A\|_p^p,
$$
for $m$ satisfying $(\ref{eq:43})$. Since $C = I_n-\alpha^{-1}A$ and by substituting $\epsilon$ with $\epsilon\over 3$ and $\delta$ with $\delta\over 2$ the theorem follows.
\end{proof}

\section{Technical Lemmas}\label{sec:three}


\begin{claim}\label{claim:1}
There exists a constant $c(p)$, (defined in $(\ref{eq:4.5})$), that depends only on $p$, such that
$$
\left|{p \choose k}\right| \le c(p)(k+1)^{-(p + 1)},
$$
for any $k> [p]+1$.
\end{claim}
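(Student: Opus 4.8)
The plan is to turn the binomial coefficient into a telescoping-style product and then compare a harmonic sum with an integral. Since $k! = \prod_{j=0}^{k-1}(j+1)$, definition $(\ref{eq:1})$ gives ${p \choose k} = \prod_{j=0}^{k-1}\frac{p-j}{j+1}$, hence $\left|{p \choose k}\right| = \prod_{j=0}^{k-1}\frac{|p-j|}{j+1}$. First I would split this product at $j=[p]$: the \emph{head} $c_1(p) \defeq \prod_{j=0}^{[p]}\frac{|p-j|}{j+1}$ is a finite product depending only on $p$, while the \emph{tail} is $P_k \defeq \prod_{j=[p]+1}^{k-1}\frac{j-p}{j+1}$, where I have used that $j-p>0$ for $j\ge [p]+1$ (since $[p]+1>p$). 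The hypothesis $k>[p]+1$ is exactly what is needed here: it makes the tail non-empty, and, more importantly, it forces every tail factor into $(0,1)$, because $\frac{j-p}{j+1} = 1-\frac{p+1}{j+1}$ and $0<\frac{p+1}{j+1}<1$ whenever $j\ge[p]+1$ (then $j+1\ge[p]+2>p+1$).

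The second step bounds $P_k$. Taking logarithms and using $\log(1-x)\le -x$ for $x\in[0,1)$,
\[
\log P_k = \sum_{j=[p]+1}^{k-1}\log\!\left(1-\frac{p+1}{j+1}\right) \le -(p+1)\sum_{j=[p]+1}^{k-1}\frac{1}{j+1} = -(p+1)\sum_{i=[p]+2}^{k}\frac{1}{i}.
\]
Since $1/x$ is decreasing, $\sum_{i=[p]+2}^{k}\frac1i \ge \int_{[p]+2}^{k+1}\frac{dx}{x} = \log\frac{k+1}{[p]+2}$, so $\log P_k \le -(p+1)\log\frac{k+1}{[p]+2}$, i.e. $P_k \le ([p]+2)^{p+1}(k+1)^{-(p+1)}$. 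Multiplying by the head bound yields $\left|{p \choose k}\right| \le c_1(p)\,([p]+2)^{p+1}\,(k+1)^{-(p+1)}$, so the claim holds with $c(p) = ([p]+2)^{p+1}\prod_{j=0}^{[p]}\frac{|p-j|}{j+1}$, which depends only on $p$ and is the constant recorded in $(\ref{eq:4.5})$.

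Finally I would dispose of the degenerate case in one line: if $p$ is a nonnegative integer, then the factor with $j=p=[p]$ in the head vanishes, so $c(p)=0$, and the inequality is trivial because ${p \choose k}=0$ for every $k>p$, in particular for $k>[p]+1$.

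I do not foresee a real obstacle; the only thing requiring care is the index bookkeeping in the product $\to$ sum $\to$ integral chain and the observation that $k>[p]+1$ is precisely the condition guaranteeing that each tail factor $1-\frac{p+1}{j+1}$ is a genuine element of $(0,1)$, so that $\log(1-x)\le -x$ may be applied. One could extract a slightly sharper constant with better integral endpoints, but the bound above already exhibits the correct order $(k+1)^{-(p+1)}$, which is all the claim asks for.
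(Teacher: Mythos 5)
Your proof is correct and follows essentially the same route as the paper: write $\left|{p\choose k}\right|$ as a product, split it into a head depending only on $p$ and a tail of factors $1-\frac{p+1}{i}$, bound the tail via $\log(1-x)\le -x$ (the paper's $1-x<e^{-x}$), and compare the harmonic sum with $\int \frac{dx}{x}$ to extract $(k+1)^{-(p+1)}$. One small correction: the constant you end with, $([p]+2)^{p+1}\prod_{j=0}^{[p]}\frac{|p-j|}{j+1}$, is not literally the $c(p)$ of $(\ref{eq:4.5})$ --- the paper's $c_1(p)$ stops the head at $i=[p]$ (your $j=[p]-1$) and simply bounds the remaining factor $\frac{p-[p]}{[p]+1}$ by $1$, so your constant is smaller by exactly that factor (and vanishes when $p$ is an integer). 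Since a smaller constant only strengthens the bound, the claim with the paper's $c(p)$ follows at once from what you prove; just do not treat your product as a redefinition of $(\ref{eq:4.5})$, because the later Claim \ref{claim:1.5} ($c(p)>p$), which Claims \ref{cl:24} and \ref{cl:25} rely on, would fail for integer $p$ with your constant.
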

\begin{proof}
Denote
$$
c_1(p)=
\begin{cases}
\prod_{i=1}^{[p]}\left|p - i + 1 \over i\right|,\ \ \ p >1, \\
1,\ \ \ 0 \le p \le 1.
\end{cases}
$$
If $i>[p]+1$ then $\left|p - i + 1 \over i\right| = 1 - {p +1 \over i}$. Thus, for $k>[p]+1$:
\begin{equation}\label{eq:2}
\left|{p \choose k}\right| = c_1(p) {p - [p]\over [p] + 1}\prod_{i=[p]+2}^k\left(1 - {p +1 \over i}\right).
\end{equation}
Since $0<1-x<e^{-x}$ for $0<x<1$ and ${p - [p]\over [p] + 1}<1$, we have:
\begin{equation}\label{eq:3}
\left|{p \choose k}\right| \le c_1(p) \exp\left(-(1+p)\sum_{i=[p]+2}^k {1 \over i}\right).
\end{equation}

Further, it is known from Langrange formula that ${1\over i} \ge \ln (i+1) - \ln i$ and thus
\begin{equation}\label{eq:4}
\sum_{i=[p]+2}^k {1 \over i} \ge \sum_{i=[p]+2}^k (\ln (i+1) - \ln i) = \ln (k+1) - \ln ([p] + 2).
\end{equation}
From here and $(\ref{eq:3})$ we obtain
\begin{equation}\label{eq:4}
\left|{p \choose k}\right| \le c_1(p) \exp\left(-(1+p)(\ln (k+1) - \ln ([p] + 2))\right) = c_1(p) (2+[p])^{p+1}(k+1)^{-(p+1)}.
\end{equation}
Denote
\begin{equation}\label{eq:4.5}
c(p) = c_1(p)(2+[p])^{p+1}.
\end{equation}
The claim follows.
\end{proof}


\begin{claim}\label{claim:1.5}
For all $p>0$,
$$
c(p) > p.
$$
\end{claim}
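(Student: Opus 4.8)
The plan is to trace through the definition of $c(p) = c_1(p)(2+[p])^{p+1}$ from $(\ref{eq:4.5})$ and show that each factor cooperates to push the product above $p$. The key observation is that $(2+[p])^{p+1} \ge 2^{p+1}$ always, since $[p]\ge 0$, and this factor alone already grows much faster than $p$. So the first step is to handle the easy regime: when $0\le p\le 1$ we have $c_1(p)=1$ and $(2+[p])^{p+1} = 2^{p+1} \ge 2 > 1 \ge p$, so the claim holds immediately. For $p>1$ the factor $[p]\ge 1$, so $(2+[p])^{p+1}\ge 3^{p+1}$, and it suffices to show $c_1(p)\cdot 3^{p+1} > p$; since $c_1(p)>0$ we can even afford to ignore $c_1(p)$ entirely and verify $3^{p+1} > p$ for all $p>1$, which is elementary (e.g. $3^{p+1}\ge 3^2 = 9 > p$ for $1<p\le 9$, and for $p>9$ the exponential dominates the linear term; or simply note $3^{p+1} > 2^{p+1} > p+1 > p$ using $2^x > x$ for all $x\ge 0$).

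Actually the cleanest route avoids cases: for every $p>0$ we have $[p]\ge 0$, hence $c(p) \ge c_1(p)\cdot 2^{p+1}$, and since $c_1(p)\ge 1$ in all cases (it is a product of absolute values that, by inspection of its definition, is at least $1$ when $0\le p\le 1$ and is a product of terms that can be bounded appropriately when $p>1$ — though here we only need $c_1(p)>0$), it is enough to show $2^{p+1} > p$. This follows from the standard inequality $2^x \ge 1+x > x$ valid for all real $x\ge 0$, applied with $x = p+1$: indeed $2^{p+1} \ge 1 + (p+1) = p+2 > p$.

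The only mildly delicate point is confirming that $c_1(p)\ge 1$ (or at least that $c_1(p)$ is not so small that it undoes the gain from $2^{p+1}$). For $0\le p\le 1$ this is immediate from the definition. For $p>1$, $c_1(p) = \prod_{i=1}^{[p]}\left|\frac{p-i+1}{i}\right|$; here one checks that the $i=1$ factor is $|p| = p > 1$ and the remaining factors, while possibly less than $1$, are bounded below by a quantity that keeps the product positive — and in fact for the claim we genuinely only need positivity of $c_1(p)$ together with $(2+[p])^{p+1}\ge 3^{p+1} > p$ when $p>1$. So I would organize the proof as: (i) dispatch $0\le p\le 1$ using $c(p)=2^{p+1}\ge p+2>p$; (ii) for $p>1$, bound $c(p) \ge c_1(p)\cdot 3^{p+1}$, observe $c_1(p)>0$, and finish with $3^{p+1} > 2^{p+1} \ge p+2 > p$. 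I expect no real obstacle here — the main thing to be careful about is not over-claiming a clean closed form for $c_1(p)$ and instead using only the facts the definition in Claim \ref{claim:1} readily supplies.
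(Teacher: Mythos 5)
Your case (i) ($0\le p\le 1$, where $c_1(p)=1$ and $c(p)=2^{p+1}>p$) is fine and matches the paper, and your case $p\in[1,2)$ would also be fine since there $c_1(p)=p\ge 1$. The genuine gap is in your treatment of larger $p$: you claim that for $p>1$ it suffices to know $c_1(p)>0$ together with $3^{p+1}>p$. That inference is simply invalid --- from $c_1(p)>0$ and $3^{p+1}>p$ you cannot conclude $c_1(p)\cdot 3^{p+1}>p$; if $c_1(p)$ were, say, $p\,3^{-(p+2)}$, the product would be below $p$. You do also assert $c_1(p)\ge 1$ ``by inspection,'' but that is not an inspection-level fact: for $p>2$ the later factors $\left|\frac{p-i+1}{i}\right|$ with $i$ near $[p]$ are well below $1$ (e.g.\ for $p=5.1$ the $i=5$ factor is $0.22$), so the product needs an actual argument. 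The statement $c_1(p)\ge 1$ is in fact true --- writing $f=p-[p]$ and reindexing $j=[p]-i+1$ gives $c_1(p)=\prod_{j=1}^{[p]}\frac{j+f}{j}=\prod_{j=1}^{[p]}\left(1+\frac{f}{j}\right)\ge 1$ --- but you neither give this nor any substitute for it, so as written the proof does not go through for $p\ge 2$.

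For comparison, the paper takes a different route in that regime: it keeps the explicit factor $p$ from the $i=1$ term, lower-bounds each remaining factor by $\frac{p-i+1}{i}\ge\frac{1}{p}$ so that $\prod_{i=2}^{[p]}\left|\frac{p-i+1}{i}\right|\ge p^{-(p-1)}$, and then absorbs this loss into $(2+[p])^{p+1}$ via $\left(\frac{2+[p]}{p}\right)^{p-1}(2+[p])^2>1$, using $2+[p]>p$. If you add the reindexing identity above to establish $c_1(p)\ge 1$, your argument becomes complete and arguably cleaner than the paper's; without it, the key step is missing. (A minor side remark: $2^x\ge 1+x$ fails for $x\in(0,1)$, though it does hold for $x\ge 1$, which is the range you actually use.)
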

\begin{proof}
Assume first that $p \in (0,1).$ Then $[p] = 0$ and $c_1(p) = 1$ so $c(p) = 2^{p+1}> p$.
Let now $p \in [1,2)$. Then $[p] = 1, c_1(p) = p$ and $c(p) = p 3^{p+1} > p$.
Assume now $p \ge 2$. Then $[p] \ge 2$ and
\begin{equation}\label{eq:1.6}
c(p) = p \prod_{i=2}^{[p]}\left|p - i + 1 \over i\right|(2+[p])^{p+1}.
\end{equation}
For $i\le [p]$ we have
$$
{p - i + 1 \over i} = {p  + 1 \over i} - 1 \ge {p  + 1 \over [p]} - 1 \ge {p  + 1 \over p} - 1 = {1\over p}.
$$
So,
$$
\prod_{i=2}^{[p]}\left|p - i + 1 \over i\right|(2+[p])^{p+1} \ge \left( {1\over p}\right)^{p-1}(2+[p])^{p+1} =  \left( {2+[p]\over p}\right)^{p-1}(2+[p])^{2}>1,
$$
where the last inequality holds because $2+[p] > p$. From here and $(\ref{eq:1.6})$ the claim follows.
\end{proof}


\begin{claim}\label{claim:2}
For $m > [p]+1$ and $|x| \le 1$:
$$
\left|h_m(x)\right|\le {c(p)\over p} |x|^{m+1}(m+1)^{-p}.
$$
\end{claim}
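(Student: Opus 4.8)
The plan is to bound $h_m$ term by term with the triangle inequality, control each binomial coefficient via Claim \ref{claim:1}, and then estimate the resulting tail series by an integral comparison.

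First I would write, directly from the definition $(\ref{eq:7})$,
$$
|h_m(x)| \le \sum_{k=m+1}^\infty \left|{p \choose k}\right| |x|^k .
$$
Since $m > [p]+1$, every index $k \ge m+1$ satisfies $k > [p]+1$, so Claim \ref{claim:1} applies to \emph{every} term of this tail and gives $\left|{p \choose k}\right| \le c(p)(k+1)^{-(p+1)}$. In particular the series converges absolutely for $|x| \le 1$, because $p>0$ makes $\sum_k (k+1)^{-(p+1)}$ a convergent $p$-series, so all manipulations below are justified. Moreover, since $|x| \le 1$ and $k \ge m+1$, we have $|x|^k \le |x|^{m+1}$. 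Combining these,
$$
|h_m(x)| \le c(p)\, |x|^{m+1} \sum_{k=m+1}^\infty (k+1)^{-(p+1)} .
$$

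It then remains to show $\sum_{k=m+1}^\infty (k+1)^{-(p+1)} \le \frac{1}{p}(m+1)^{-p}$. I would obtain this by comparison with the integral of the positive decreasing function $t \mapsto t^{-(p+1)}$: for each integer $j \ge m+2$ one has $j^{-(p+1)} \le \int_{j-1}^{j} t^{-(p+1)}\,dt$, and summing over $j$ (equivalently over $k = j-1$),
$$
\sum_{k=m+1}^\infty (k+1)^{-(p+1)} = \sum_{j=m+2}^\infty j^{-(p+1)} \le \int_{m+1}^\infty t^{-(p+1)}\,dt = \frac{1}{p}(m+1)^{-p} .
$$
Substituting this into the previous display yields $|h_m(x)| \le \frac{c(p)}{p}|x|^{m+1}(m+1)^{-p}$, which is exactly the claim.

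There is no real obstacle here; the proof is routine. The only points deserving a moment of care are (i) verifying that the hypothesis $m > [p]+1$ is precisely what makes Claim \ref{claim:1} cover every term in the tail $\sum_{k \ge m+1}$, and (ii) getting the constant right in the integral comparison, so that the factor $\frac1p$ in front of $(m+1)^{-p}$ comes out with no spurious extra constant (it arises as $\int_{m+1}^\infty t^{-(p+1)}\,dt$). If one prefers to avoid calculus entirely, step (ii) can instead be done by a telescoping estimate of the form $(k+1)^{-(p+1)} \le \frac1p\big(k^{-p}-(k+1)^{-p}\big)$, which follows from the mean value theorem applied to $t\mapsto t^{-p}$ on $[k,k+1]$, and then summing telescopes to the same bound.
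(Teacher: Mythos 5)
Your proposal is correct and follows essentially the same route as the paper's proof: triangle inequality on the tail, Claim \ref{claim:1} applied to each coefficient (valid since $m>[p]+1$), the bound $|x|^k\le|x|^{m+1}$, and the integral comparison $\sum_{k\ge m+1}(k+1)^{-(p+1)}\le\int_{m+1}^\infty t^{-(p+1)}\,dt=\frac{1}{p}(m+1)^{-p}$. No gaps; the telescoping alternative you mention is a fine but unnecessary variant.
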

\begin{proof}
We have:
$$
|h_m(x)| \le \sum_{k=m+1}^\infty \left|{p \choose k}\right| |x|^k \le |x|^{m+1}\sum_{k=m+1}^\infty \left|{p \choose k}\right|  \le |x|^{m+1}c(p)\sum_{k=m+1}^\infty (k+1)^{-(p+1)},
$$
where the first inequality follows from the definition of $h_m(x)$ in $(\ref{eq:7})$, the second inequality follows since $|x| \le 1$ and the third inequality follows from Claim \ref{claim:1}.Let us estimate the last sum\footnote{This and some other bounds in this paper are well known and we include their proofs for completeness.}. For any $k\le x\le k+1$ we have $(k+1)^{-(p+1)}\le x^{-(p+1)}$. Integrating over $[k, k+1]$ we obtain
$$
(k+1)^{-(p+1)} \le \int_{k}^{k+1}x^{-(p+1)} dx.
$$
Thus
$$
\sum_{k=m+1}^\infty (k+1)^{-(p+1)} \le \sum_{k=m+1}^\infty \int_{k}^{k+1}x^{-(p+1)} dx = \int_{m+1}^{\infty}x^{-(p+1)} dx = {1\over p}(m+1)^{-p}.
$$
The claim follows from here and the above bound.
\end{proof}


\begin{claim}\label{cl:11}
For any $x\in [0,1]$ and $m > [p] +1$:
$$
1 + {c(p)\over p} (m+1)^{-p} - \tilde{h}_m(x) \ge (1-x)^p \ge 0.
$$
\end{claim}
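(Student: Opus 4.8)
The plan is to reduce the claimed inequality to the tail estimate already proved in Claim \ref{claim:2}. First I would dispose of the easy half: for $x\in[0,1]$ and $p\ge 1$ the quantity $(1-x)^p$ is a nonnegative real number, so the right-hand inequality $(1-x)^p\ge 0$ is immediate and all the content lies in the left-hand inequality.

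For the left-hand inequality I would work first on the open interval $x\in[0,1)$, where the Taylor identity $(\ref{eq:9})$ is valid. Writing $1-(1-x)^p=\tilde h_m(x)+h_m(x)$ and solving for $\tilde h_m(x)$, substitution into the left-hand side of the claim gives
$$
1+\frac{c(p)}{p}(m+1)^{-p}-\tilde h_m(x)-(1-x)^p=\frac{c(p)}{p}(m+1)^{-p}+h_m(x).
$$
Hence, for $x\in[0,1)$, the claim is equivalent to the bound $h_m(x)\ge-\frac{c(p)}{p}(m+1)^{-p}$. This follows directly from Claim \ref{claim:2}: since $m>[p]+1$ and $|x|\le 1$ we get $|h_m(x)|\le\frac{c(p)}{p}|x|^{m+1}(m+1)^{-p}\le\frac{c(p)}{p}(m+1)^{-p}$, and in particular $h_m(x)$ cannot be smaller than the negative of this bound.

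Finally I would extend the inequality to the endpoint $x=1$ by continuity: $\tilde h_m$ is a polynomial, $(1-x)^p$ is continuous on $[0,1]$, and the additive constant is fixed, so an inequality that holds on the dense subset $[0,1)$ holds at $x=1$ as well. (Equivalently, one may note that $\sum_k|\binom{p}{k}|$ converges by Claim \ref{claim:1} when $p>0$, so $h_m(1)$ is well defined and Claim \ref{claim:2} applies verbatim at $x=1$.) The only mild subtlety is exactly this endpoint, because $(\ref{eq:9})$ is asserted only for $|x|<1$; apart from that the argument is a one-line substitution combined with the already-established tail bound, so I do not expect any real obstacle.
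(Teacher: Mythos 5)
Your proposal is correct and follows essentially the same route as the paper: rewrite $1-\tilde h_m(x)$ via the identity $1-(1-x)^p=\tilde h_m(x)+h_m(x)$ and then absorb the tail using the bound $|h_m(x)|\le \frac{c(p)}{p}(m+1)^{-p}$ from Claim \ref{claim:2}. Your extra care at the endpoint $x=1$ (by continuity, or by noting the tail series converges absolutely there) is a small refinement the paper's proof leaves implicit, but the argument is otherwise identical.
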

\begin{proof}
We have $1- \tilde{h}_m(x) = [1-h(x)] + {h}_m(x) = (1-x)^p +{h}_m(x)$, and, by Claim \ref{claim:2}, $|h_m(x)|\le {c(p)\over p} (m+1)^{-p}
$.
So $1- \tilde{h}_m(x) \ge (1-x)^p - {c(p)\over p}(m+1)^{-p}$.
The claim follows.
\end{proof}

\begin{claim}\label{cl:24}
Let $0<a<1$. If
$$
m> {\log (c(p)/p) - p \log (1-a)\over 1-a} - 1
$$
then $1-\tilde{h}_m(x)>0$ for $0<x\le a$.
\end{claim}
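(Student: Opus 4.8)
The plan is to reduce the inequality $1 - \tilde{h}_m(x) > 0$ on $(0,a]$ to a statement about the tail term $h_m(x)$, which we already control via Claim \ref{claim:2}. First I would recall the decomposition from equation (\ref{eq:9}) (equivalently, the identity used in Claim \ref{cl:11}): for $|x|<1$,
$$
1 - \tilde{h}_m(x) = (1-x)^p + h_m(x).
$$
Since $0 < x \le a < 1$, the term $(1-x)^p$ is positive, so it suffices to show that $h_m(x)$ does not overwhelm it, i.e. that $(1-x)^p > |h_m(x)|$. Note the monotonicity: for $0 < x \le a$ we have $(1-x)^p \ge (1-a)^p$, and $|x|^{m+1} \le a^{m+1} \le 1$, so it is enough to guarantee $(1-a)^p > \frac{c(p)}{p}(m+1)^{-p}$ — but actually I should be slightly careful, since the bound from Claim \ref{claim:2} gives $|h_m(x)| \le \frac{c(p)}{p}|x|^{m+1}(m+1)^{-p}$, and I want this strictly less than $(1-x)^p$ for each fixed $x\in(0,a]$; bounding $|x|^{m+1}$ by $1$ and $(1-x)^p$ below by $(1-a)^p$ makes the desired inequality uniform in $x$, namely $\frac{c(p)}{p}(m+1)^{-p} < (1-a)^p$ would suffice, but I only need $\le$ with a strict inequality somewhere, so I will instead keep the sharper route: it suffices that $\frac{c(p)}{p}(m+1)^{-p} \le (1-a)^p$, because then $|h_m(x)| \le \frac{c(p)}{p}(m+1)^{-p} \le (1-a)^p \le (1-x)^p$, and the first inequality in Claim \ref{claim:2} together with $x>0$ is actually strict unless the tail vanishes — to be safe I will simply aim for the strict form $\frac{c(p)}{p}(m+1)^{-p} < (1-a)^p$.

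Next I would solve that inequality for $m$. Taking logarithms, $\frac{c(p)}{p}(m+1)^{-p} < (1-a)^p$ is equivalent to $\log(c(p)/p) - p\log(m+1) < p\log(1-a)$, i.e. $p\log(m+1) > \log(c(p)/p) - p\log(1-a)$, i.e.
$$
\log(m+1) > \frac{\log(c(p)/p) - p\log(1-a)}{p}.
$$
Using the elementary bound $\log(m+1) \ge 1 - \frac{1}{m+1}$ is too weak; instead I would use $\log u \ge 1 - 1/u$ in the reversed form, or more simply the fact that it suffices to have $m+1$ large. The cleanest path: since $\log(1-a) < 0$, we have $\frac{\log(c(p)/p) - p\log(1-a)}{p} = \frac{1}{p}\log(c(p)/p) - \log(1-a)$, and I want $m+1 > \exp\!\big(\tfrac1p\log(c(p)/p) - \log(1-a)\big) = \big(\tfrac{c(p)}{p}\big)^{1/p}\cdot\tfrac{1}{1-a}$. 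Comparing with the hypothesis $m > \frac{\log(c(p)/p) - p\log(1-a)}{1-a} - 1$: here the denominator is $1-a$ rather than $p$, and $1-a < 1$, so the hypothesis gives a larger threshold than $\frac{\log(c(p)/p) - p\log(1-a)}{p} - 1$ whenever $\frac{1}{1-a} \ge \frac1p$, which holds for all $p\ge 1$ (and the claim will be applied with $p\ge1$). Thus the hypothesis on $m$ implies $\log(m+1) > \frac{\log(c(p)/p) - p\log(1-a)}{p}$, hence the logarithmic inequality above, hence $\frac{c(p)}{p}(m+1)^{-p} < (1-a)^p$.

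Finally I would assemble: for any $x$ with $0 < x \le a$, combining the two displays,
$$
1 - \tilde{h}_m(x) = (1-x)^p + h_m(x) \ge (1-x)^p - |h_m(x)| \ge (1-a)^p - \tfrac{c(p)}{p}(m+1)^{-p} > 0,
$$
where the second step uses $(1-x)^p \ge (1-a)^p$ and $|h_m(x)| \le \tfrac{c(p)}{p}|x|^{m+1}(m+1)^{-p} \le \tfrac{c(p)}{p}(m+1)^{-p}$ from Claim \ref{claim:2} (valid since $m > [p]+1$, which the stated lower bound on $m$ forces once one checks $c(p)/p \ge 1$ by Claim \ref{claim:1.5} and $-\log(1-a)>0$), and the last step uses the inequality derived from the hypothesis. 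I expect the main obstacle to be the bookkeeping in the logarithmic manipulation — specifically making sure the denominator $1-a$ in the hypothesis (as opposed to $p$) is handled correctly and that the side condition $m>[p]+1$ needed to invoke Claim \ref{claim:2} is actually implied; both reduce to the monotonicity facts $1-a<1$ and $c(p)>p$ (Claim \ref{claim:1.5}), so no serious difficulty is anticipated.
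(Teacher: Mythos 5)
There is a genuine gap. After writing $1-\tilde{h}_m(x) = (1-x)^p + h_m(x)$ you bound $|h_m(x)|$ by discarding the factor $|x|^{m+1}$ (via $|x|^{m+1}\le 1$) and keeping only the polynomial factor $(m+1)^{-p}$, so your sufficient condition becomes $\frac{c(p)}{p}(m+1)^{-p} < (1-a)^p$, i.e.\ $m+1 > (c(p)/p)^{1/p}/(1-a)$. But the claim's hypothesis only guarantees $m+1 > \bigl[\log(c(p)/p) - p\log(1-a)\bigr]/(1-a)$, which is in general smaller: it is logarithmic in $c(p)/p$ and in $\frac{1}{1-a}$, while your requirement is exponential in those logarithms. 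The explicit slip is the sentence ``Thus the hypothesis on $m$ implies $\log(m+1) > \frac{\log(c(p)/p)-p\log(1-a)}{p}$'': the hypothesis only gives $m+1$ larger than (a multiple of) that quantity, and $m+1>T$ does not yield $\log(m+1)>T$. Concretely, take $p=3/2$, $a=1/2$, so $c(p)/p = 3^{5/2}\approx 15.6$; the hypothesis admits $m=7$, yet $\frac{c(p)}{p}(m+1)^{-p}\approx 0.69 > (1-a)^p \approx 0.35$, so your intermediate inequality is false for an $m$ the claim must cover (the claim itself is still true there). Your extra restriction to $p\ge 1$ is also not in the statement, but that is secondary.

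The repair is the opposite bookkeeping, which is what the paper does: keep the geometric factor and drop the polynomial one, i.e.\ $|h_m(x)|\le \frac{c(p)}{p}|x|^{m+1}(m+1)^{-p}\le \frac{c(p)}{p}a^{m+1}$ for $0<x\le a$, and then solve $(1-a)^p - \frac{c(p)}{p}a^{m+1}>0$ for $m$, which gives $m > \bigl[\log(c(p)/p) - p\log(1-a)\bigr]/(-\log a) - 1$. Since $\log(c(p)/p)>0$ by Claim \ref{claim:1.5} and $-p\log(1-a)>0$, the numerator is positive, and the elementary inequality $-\log a \ge 1-a$ (equivalently $\frac{-1}{\log x}\le \frac{1}{1-x}$ on $(0,1)$, proved by monotonicity of $1-x+\log x$) shows that the hypothesis, whose denominator is $1-a$, is the stronger requirement; hence every $m$ satisfying the hypothesis satisfies the solved inequality, and $1-\tilde{h}_m(x)\ge (1-a)^p - \frac{c(p)}{p}a^{m+1}>0$ on $(0,a]$. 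The geometric decay $a^{m+1}$ is essential because the threshold in the hypothesis scales only like $\frac{1}{1-a}\bigl(\log(c(p)/p)+p\log\frac{1}{1-a}\bigr)$; relying solely on the polynomial decay $(m+1)^{-p}$, as you do, is the mechanism behind Claim \ref{cl:11}, which only yields positivity of $1+\beta_m-\tilde{h}_m(x)$, not of $1-\tilde{h}_m(x)$.
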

\begin{proof}
Claim \ref{claim:2} yields
$$
1- \tilde{h}_m(x) = (1-x)^p + h_m(x) \ge (1-a)^p - |h_m(x)|\ge (1-a)^p - {c(p)\over p} x^{m+1}(m+1)^{-p} \ge
(1-a)^p - {c(p)\over p} a^{m+1}.
$$
Solving the inequality $(1-a)^p - {c(p)\over p} a^{m+1} > 0$ with respect to $m$ we get
\begin{equation}\label{eq:jdsfjhsdf}
m> {\log (c(p)/p) - p \log (1-a)\over -\log a} - 1.
\end{equation}
Now we use the inequality
\begin{equation}\label{eq:65}
{-1\over \log x}\le {1\over 1-x},\ \ \ \ 0<x<1.
\end{equation}
To check it, note that this inequality is equivalent to
$$
g(x) \defeq  {1-x + \log x} < 0, \ \ 0<x<1.
$$
Since $g'(x) = -1 +{1\over x} > 0,$
for $0<x<1$ then $g(x) < g(1) = 0.$

\noindent
We have, according to Claim \ref{claim:1.5}, $\log (c(p)/p) > 0.$ Hence $(\ref{eq:65})$ yields
$$
{\log (c(p)/p) - p \log (1-a)\over -\log a} < {\log (c(p)/p) - p \log (1-a)\over 1-a}.
$$
From here and $(\ref{eq:jdsfjhsdf})$ the claim follows.
\end{proof}

\begin{claim}\label{cl:25}
Let $0<a<1$ and
\begin{equation}\label{eq:29}
m> {1\over a}\left[p \log \left({1\over a}\right) + \log \left({1\over \epsilon}\right) + \log \left({c(p)\over p}\right) \right] - 1.
\end{equation}
Then $|{h}_m(1-x)| < \epsilon x^p$ for $a\le x\le 1$.
\end{claim}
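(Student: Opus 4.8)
The plan is to reduce the statement to the tail bound of Claim~\ref{claim:2} and then to the elementary logarithmic inequality $(\ref{eq:65})$ that was already used in the proof of Claim~\ref{cl:24}.

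First, for $a\le x\le 1$ we have $|1-x|=1-x\le 1-a<1$, so Claim~\ref{claim:2} applied with argument $1-x$ (valid for $m>[p]+1$, which is the regime in which the claim is used) gives
$$|h_m(1-x)|\ \le\ \frac{c(p)}{p}\,(1-x)^{m+1}(m+1)^{-p}\ \le\ \frac{c(p)}{p}\,(1-a)^{m+1}.$$
Since $x\ge a$ forces $x^p\ge a^p$, it suffices to establish the single numerical inequality $\tfrac{c(p)}{p}(1-a)^{m+1}<\epsilon a^p$: this yields $|h_m(1-x)|\le \tfrac{c(p)}{p}(1-a)^{m+1}<\epsilon a^p\le \epsilon x^p$, as required.

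To prove that numerical inequality I would take logarithms. Since $\log(1-a)<0$, the inequality $\tfrac{c(p)}{p}(1-a)^{m+1}<\epsilon a^p$ is equivalent to
$$m+1\ >\ \frac{\log(1/\epsilon)+p\log(1/a)+\log(c(p)/p)}{-\log(1-a)}.$$
The numerator here is strictly positive: $\log(1/\epsilon)>0$ as $\epsilon\in(0,1)$, $p\log(1/a)>0$ as $a\in(0,1)$, and $\log(c(p)/p)>0$ by Claim~\ref{claim:1.5}. Now apply $(\ref{eq:65})$ with $x=1-a$, i.e.\ $\tfrac{1}{-\log(1-a)}\le\tfrac{1}{a}$; multiplying by the positive numerator gives
$$\frac{\log(1/\epsilon)+p\log(1/a)+\log(c(p)/p)}{-\log(1-a)}\ \le\ \frac{1}{a}\left[p\log(1/a)+\log(1/\epsilon)+\log(c(p)/p)\right].$$
Hence the hypothesis $(\ref{eq:29})$ — which is exactly $m>\tfrac{1}{a}\big[p\log(1/a)+\log(1/\epsilon)+\log(c(p)/p)\big]-1$ — implies $m+1$ exceeds the right-hand side, hence also the left-hand side, of the last display, which is the claimed reformulation, completing the proof.

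The argument is essentially a two-step reduction and I do not expect a genuine obstacle; the one place to be careful is the passage to logarithms and the subsequent use of $(\ref{eq:65})$, where the direction of every inequality hinges on $\log(1-a)<0$ and on the numerator being positive — the latter being precisely why Claim~\ref{claim:1.5} ($c(p)>p$) is invoked.
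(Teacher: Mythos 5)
Your proposal is correct and follows essentially the same route as the paper's proof: apply Claim~\ref{claim:2}, bound $(1-x)^{m+1}(m+1)^{-p}$ by $(1-a)^{m+1}$ and $x^{-p}$ by $a^{-p}$, solve the resulting inequality for $m$ by taking logarithms, and then use inequality $(\ref{eq:65})$ together with Claim~\ref{claim:1.5} (positivity of the numerator) to pass to the cleaner bound $(\ref{eq:29})$. The only difference is cosmetic — you keep $\epsilon a^p$ on the right rather than dividing through by $x^p$ first — and your explicit remark about the sign of the numerator makes the direction of the final inequality transparent.
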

\begin{proof}
According to Claim \ref{claim:2},
$$
x^{-p} |{h}_m(1-x)| \le x^{-p} {c(p)\over p}(1-x)^{m+1}(m+1)^{-p} \le {c(p)\over p}(1-a)^{m+1}a^{-p},
$$
because $a\le x\le 1$. Now we choose $m$ such that last expression will be less than $\epsilon$. Solving the corresponding inequality, we get
\begin{equation}\label{eq:28}
m > {p \log \left({1\over a}\right) + \log \left({1\over \epsilon}\right) + \log \left({c(p)\over p}\right) \over -\log (1-a)} - 1.
\end{equation}
But we know from above that
$$
{1\over -\log(1-a)}\le {1\over a}.
$$
We also know from Claim \ref{claim:1.5} that $\log (c(p)/p) > 0$ so the right hand side of $(\ref{eq:28})$ is less than the right hand side of $(\ref{eq:29})$. Hence the claim follows.

\end{proof}

\bibliographystyle{abbrv}
\bibliography{bib}

\end{document}